\documentclass[12pt,letterpaper,top=2.5cm,bottom=2.5cm,left=3cm,right=2.5cm,marginparwidth=1.75cm]{article}
\usepackage{arxiv}

\usepackage{graphicx}

\usepackage{caption}
\usepackage{subcaption}
\usepackage{textcomp}
\usepackage{amssymb}
\usepackage{amsthm}
\usepackage{amsmath}
\DeclareMathOperator{\sgn}{sgn}
\usepackage{url}
\usepackage{colortbl}
\usepackage{soul}
\usepackage{multirow}
\usepackage{float}
\usepackage{physics}
\usepackage{pifont}
\usepackage{color}
\usepackage{alltt}
\usepackage[hidelinks]{hyperref}
\usepackage{enumerate}
\usepackage{siunitx}
\usepackage{breakurl}
\usepackage{epstopdf}
\usepackage{pbox}
\usepackage{stfloats}
\usepackage[vlined,linesnumberedhidden,ruled,resetcount]{algorithm2e}
\usepackage{algpseudocode}  
\usepackage{cite}
\usepackage{amsmath, amsfonts}\usepackage{graphicx}

\usepackage{caption}
\usepackage{subcaption}
\usepackage{textcomp}
\usepackage{url}
\usepackage{colortbl}
\usepackage{multirow}
\usepackage{float}
\usepackage{physics}
\usepackage{pifont}
\usepackage{color}
\usepackage{alltt}
\usepackage[hidelinks]{hyperref}
\usepackage{enumerate}
\usepackage{siunitx}
\usepackage{breakurl}
\usepackage{epstopdf}
\usepackage{pbox}
\usepackage{stfloats}
\usepackage[vlined,linesnumberedhidden,ruled,resetcount]{algorithm2e}
\usepackage{algpseudocode}  
\usepackage{cite}
\usepackage{amsmath, amsfonts}

\allowdisplaybreaks

\newtheorem{propp}{Proposition}
\newtheorem{remm}{Remark}
\newtheorem{assm}{Assumption}

\usepackage{titlesec}
\usepackage{array}     
\usepackage{diagbox}   
\usepackage{booktabs}  

\begin{document}

\title{Robust Linear Quadratic Optimal Control of Cementitious Material Extrusion} 

\author{Mandana Mohammadi Looey, 
Amrita Basak, 
and Satadru Dey
\thanks{The authors are with the Department of Mechanical Engineering, The Pennsylvania State University, University Park, Pennsylvania 16802, USA. (e-mails: mfm6970@psu.edu, aub1526@psu.edu, skd5685@psu.edu).}
\thanks{This work was supported by National Science Foundation under Grants No. 2346650. The opinions, findings, and conclusions or recommendations expressed are those of the author(s) and do not necessarily reflect the views of the National Science Foundation.}}

\maketitle

\begin{abstract}                
Extrusion-based 3D printing of cementitious materials enables fabrication of complex structures, however it is highly sensitive to disturbances, material property variations, and process uncertainties that decrease flow stability and dimensional fidelity. To address these challenges, this study proposes a robust linear quadratic optimal control framework for regulating material extrusion in cementitious direct ink writing systems. The printer is modeled using two coupled subsystems: an actuation system representing nozzle flow dynamics and a printing system describing the printed strand flow on the build plate. A hybrid control architecture combining sliding mode control for disturbance rejection with linear quadratic optimal feedback for energy-efficient tracking is developed to ensure robustness and optimality. In simulation case studies, the control architecture guarantees acceptable convergence of nozzle and strand flow tracking errors under bounded disturbances. 
\end{abstract}



\section{Introduction}



Extrusion-based 3D printing, or direct ink writing (DIW), provides high-resolution deposition and geometric flexibility for cementitious materials, enabling the realization of intricate built-environment structures beyond the capabilities of traditional construction techniques. However, printability and resolution are strongly influenced by process parameters \cite{fasihi2024interaction} and ink rheology \cite{zhao2022effects},\cite{scalise2026multiphase}. Consequently, optimizing material properties and process conditions has been a major focus in recent research. For instance, Wang et al. evaluated printing quality based on filament uniformity, surface roughness, and cross-sectional shape to determine optimal printing conditions \cite{wang2025optimizing}.  
Hiremath et al. used a custom-built 3D printer to optimize material formulation and printing parameters for construction-scale 3D printing \cite{hiremath2025performance}. These studies improved shape retention, deposition consistency, and surface quality \cite{tu2023optimizing}, but they 
cannot adapt to variations in printing conditions in real time.



Incorporating real-time control into DIW systems addresses this limitation by enabling dynamic adjustment of process parameters, reducing manual intervention, and improving repeatability, accuracy, and efficiency in printing operations. Several closed-loop control strategies have been explored for extrusion-based 3D printing. Zomorodi and Landers implemented a hierarchical control structure combining ram velocity and extrusion force control in the Freeze-form Extrusion Fabrication process, showing improved robustness to variations in paste properties \cite{zomorodi2016extrusion}. Rabiei and Moini developed a feedback controller that regulates nozzle speed based on pressure feedback to maintain dimensional fidelity \cite{rabiei2025extrusion}. Ahi et al. combined printing parameters with rheological properties to control material flow rate in 3D concrete printing \cite{ahi2024automated} .


Despite the advances in closed-loop and optimization-based control for extrusion-based 3D printing \cite{zomorodi2016extrusion, rabiei2025extrusion, ahi2024automated, 9108592, guidetti2023data, guidetti2024data}, existing approaches often do not explicitly consider both real-time optimality and robustness in the printing and actuation systems. In our framework, optimality in the printing system is defined through two key criteria: first, the ability of the printer to accurately follow a prescribed reference flow velocity trajectory for the printed strand, minimizing deviations that could compromise layer uniformity and structural fidelity \cite{tu2023optimizing}; and second, minimizing the mechanical energy expended by the build plate, which is critical for efficient, cost-effective operation and reducing wear on printer components \cite{tu2023optimizing}.

Equally important is robustness, which ensures that both the printing system and the actuation system maintain desired performance in the presence of disturbances and uncertainties. In the printing system, disturbances may arise from variations in ink rheology, environmental conditions, or unexpected flow instabilities \cite{fasihi2024interaction, zhao2022effects}, and robustness ensures that the printed strand adheres to the target flow profile despite these factors. In the actuation system, uncertainties in nozzle response, motor dynamics, or sensor measurements can affect flow control \cite{fasihi2024interaction, zhao2022effects}, and robustness guarantees that the nozzle velocity closely tracks the reference trajectory without significant deviations. Through explicit integration of these optimality and robustness criteria, our approach provides a predictable, repeatable, and efficient printing process, enabling high-quality extrusion-based 3D printing of cementitious materials under variable operating conditions \cite{ahmed2023review, chen2022review}.  

In response to the aforementioned limitations, \textit{this study develops and analyzes a robust linear quadratic optimal control framework for extrusion-based 3D printing of cementitious materials -- which guarantees disturbance rejection in the actuation system and energy-efficient, high-accuracy trajectory tracking in the printing system.}




The structure of the paper is as follows: Section 2 presents the system model and formalizes the control problem. Section 3 details the robust linear quadratic control framework. Section 4 provides model validation, simulation results, and their analysis. Finally, Section 5 concludes the study and highlights future research directions.

\section{Robust Optimal Control of Extrusion-based Printer}

In this section, we discuss the system-level conceptualization of the 3D printer and its control.

\subsection{System Setup}

\begin{figure}[h!]
    \centering
    \includegraphics[width=0.5\textwidth]{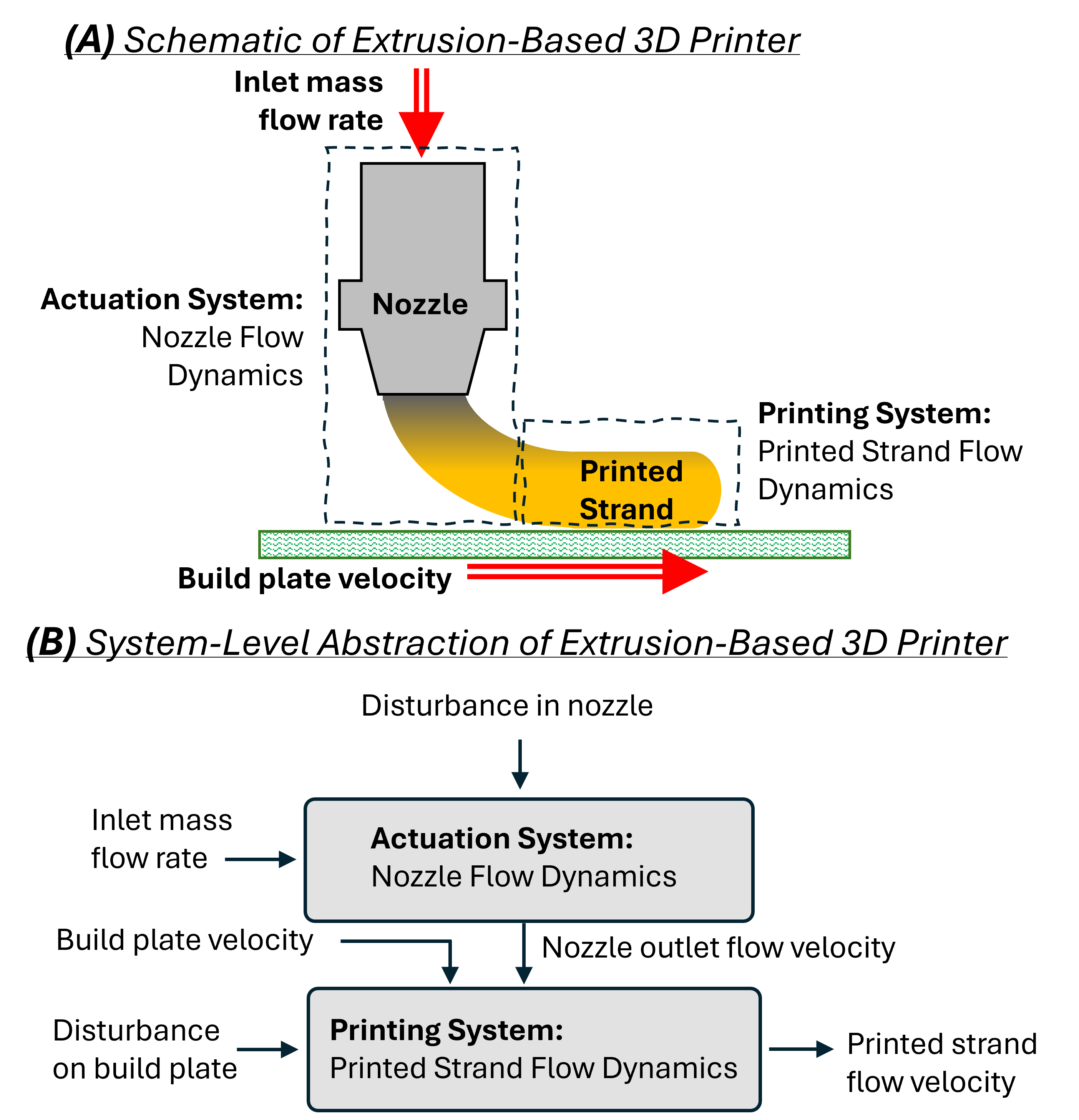}
    \caption{Schematic and system-level abstraction of extrusion-based 3D printer. }
    \label{fig:system_setup}
\end{figure}

A conceptual schematic of an extrusion-based 3D printer is shown in Fig. \ref{fig:system_setup}(A). The 3D printer can be conceptualized to have two underlying systems that work together: \textit{Actuation System} which represents the material flow dynamics in the nozzle depending on the inlet conditions, nozzle geometry, and the material rheology; and \textit{Printing System} which represents the flow dynamics in printed strand. A pressure-driven flow is created through the nozzle by a controlled inlet mass flowrate introduced into the nozzle. In the next stage, the material undergoes bending and swelling after exiting the nozzle and before deposition.

Fig. \ref{fig:system_setup}(B) shows the system-level abstraction of the 3D printer. Here, the \textit{Actuation System (AS)} receives the inlet mass flow rate as a control input signal while being affected by various disturbances in the nozzle. The \textit{Printing System (PS)} in turn receives the flow velocity from nozzle outlet and gets actuated by the moving plate velocity, while being affected by the disturbances in the plate area. The ultimate output of interest is the flow velocity in the printed strand.

\subsection{Mathematical Representation of Extrusion}
Based on our prior work on reduced order modeling of extrusion-based printers \cite{looey2025physics}, two cascaded systems shown in Fig. \ref{fig:system_setup}(B) can be represented by:
\begin{align}
    & \text{Actuation System: } \dot{x}_1 = A_1 x_1 + B_1 u_1 + \eta_1, \label{ss-1}\\
    & \text{Printing System: } \dot{x}_2 = A_2 x_2 + A_{21} x_1 + B_2 u_2 + \eta_2, \label{ss-2}
\end{align}
where $x_1$ and $x_2$ represent the states (material flow velocities) of the actuation system and the printing system, $u_1$ (inlet mass flow rate) and $u_2$ (build plate velocity) represent the inputs, $A_1$, $A_2$, $A_{21}$, $B_1$ and $B_2$ represent system matrices, and $\eta_1$ and $\eta_2$ represent disturbances.  


The dynamic state-space model \eqref{ss-1}-\eqref{ss-2} are derived from conservation of mass and momentum governing cementitious material flow through the nozzle and onto the build plate. In the actuation system, the inlet mass flow rate produces pressure-driven flow in the nozzle, where the outlet velocity is influenced by viscous resistance, nozzle geometry, and material rheology. The printing system then captures the evolution of the printed strand velocity, governed by the nozzle outlet flow and the build plate motion.

\subsection{Optimality and Robustness in Extrusion}

Here, we discuss the physical requirements and intuition for optimality and robustness in material extrusion.

\noindent \textbf{Optimality in the Printing System:} We consider two optimality measures in 3D printing performance: one arises from the printer's ability to follow a prescribed reference flow velocity trajectory for the printed strand with minimum error, and the other one originates from the need to minimize the applied input energy to the printer. {Minimization of the reference tracking error helps to avoid under/over extrusion and to maintain geometrical fidelity.}  

\noindent \textbf{Robustness in the Printing System:} As captured by the term $\eta_2$ in \eqref{ss-2}, the printing system is subject to various disturbances and uncertainties. {These disturbances and uncertainties represent variations in the printing process that cannot be perfectly predicted or controlled, such as partial nozzle clogging or unintended movement or vibrations of the build plate.} 

\noindent \textbf{Robustness in the Actuation System:} As captured by the term $\eta_1$ in \eqref{ss-1}, the actuation system is also subject to various disturbances and uncertainties. {Physically, these represent imperfections in the mechanisms that drive material extrusion, such as fluctuations in motor torque, actuator response delays, or variability in the extrusion drive. } Robustness in this case essentially means to follow the prescribed nozzle flow velocity behavior without being significantly affected by these disturbances and uncertainties. 

\section{Robust Linear Quadratic Optimal Control Framework}

In this section, we discuss the proposed robust optimal control approach.

\noindent \textbf{Robust Optimal Control Problem:} \textit{Given the mathematical representation of the 3D printer \eqref{ss-1}-\eqref{ss-2}, and prescribed reference nozzle flow velocity trajectory $x_{1_r}$ and reference printed strand flow velocity trajectory $x_{2_r}$, our objective is to find robust optimal control laws for the inlet mass flow rate $u_1 = f_1(x_1)$ and plate velocity $u_2 = f_2(x_1,x_2)$, such that the optimality and robustness criteria in Section 2.3 are satisfied.}

\subsection{Proposed Robust Optimal Control Architecture}
The proposed architecture of the robust optimal controller is shown in Fig. \ref{fig:control_dia}. The control system receives reference flow velocity profiles along with feedback signals (either measured by sensors or estimated by a real-time estimator), and in turn computes the control signals to achieve the control objectives. The control system consists of two control algorithms: 
\begin{itemize}
    \item \textit{Nozzle flow controller:} This controller operates on the error between reference nozzle flow velocity and the actual nozzle flow velocity, and computes the control signal of inlet mass flow rate. 
    \item \textit{Printed strand flow controller:} This controller operates on the error between reference printed strand flow velocity and the actual strand flow velocity, and computes the control signal of build plate velocity.
\end{itemize}

\begin{figure*}[h!]
    \centering
    \includegraphics[width=\textwidth]{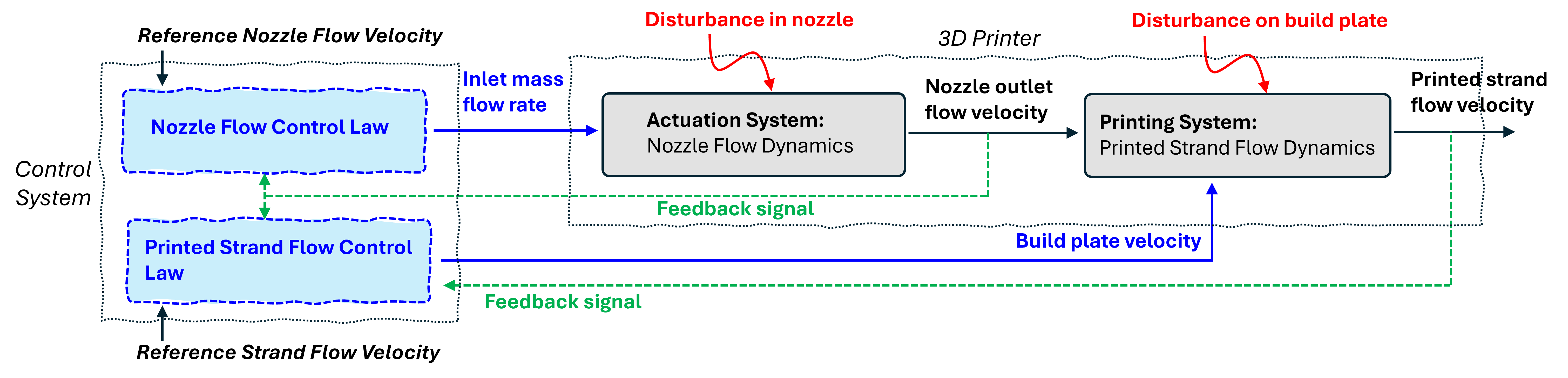}
    \caption{Proposed robust optimal control architecture. }
    \label{fig:control_dia}
\end{figure*}

\subsection{Design and Analysis of the Robust Optimal Controller}

In this subsection, we discuss the design approach of the controllers and provide a theoretical analysis of their performance. First, we make the following assumptions regarding the disturbances and reference trajectories:

\vspace{1mm}

\begin{assm}
    The uncertainty terms in \eqref{ss-1} and \eqref{ss-2} are bounded by the following:
    \begin{align}
        & \left|\eta_1\right| \leqslant \bar{\eta}_1, \ \left|\eta_2\right| \leqslant \bar{\eta}_2, \label{bnd}
    \end{align}
    where $\bar{\eta}_1,\bar{\eta}_2$ are constants known apriori. Furthermore, the reference trajectories and their derivatives are also bounded by the following:
    \begin{align}
        & \left|x_{1_r}\right| \leqslant \bar{x}_{1_r}, \left|\dot{x}_{1_r}\right| \leqslant \bar{x}_{1_{rd}}, \left|x_{2_r}\right| \leqslant \bar{x}_{2_r}, \left|\dot{x}_{2_r}\right| \leqslant \bar{x}_{2_{rd}}, \label{bnd-2}
    \end{align}
    where $\bar{x}_{1_r},\bar{x}_{1_{rd}},\bar{x}_{2_r},\bar{x}_{2_{rd}}$ are constants known apriori.
\end{assm}

\vspace{1mm}


Next, we choose the following mathematical forms of the control laws:
 \begin{align}
        &  u_1 = K_1 \sgn(x_{1_r}-x_1), \label{con-lw-1}\\
        &   u_2 = -K_{21}x_1 + u_{\mathrm{SM}} + u_{\mathrm{OPT}}, \label{con-lw-2} 
    \end{align}
    where \eqref{con-lw-1} represents \textit{nozzle flow control law} and \eqref{con-lw-2} represents \textit{printed strand flow control law}; $u_{\mathrm{SM}} =  K_{22}\,\mathrm{sgn}(s)$ with the sliding surface $s = \tilde{x}_2 - \int_{0}^{t}(A_2 - {B_2^{2}P}/{R})\tilde{x}_2\,d\tau$ and $u_{\mathrm{OPT}} = ({B_2P}/{R})\,\tilde{x}_2$; ${\tilde{x}}_2 = {x}_{2r} - {x}_2$; $K_1,K_{21},K_{22},R,P$ are control parameters to be determined; and $\sgn(a)=\left| a \right|/a, \forall a \neq 0$ is the signum function.

\vspace{1mm}

    \begin{remm}
        Note that the \textit{nozzle flow control law} \eqref{con-lw-1} consists of a sliding mode control term ($K_1 \sgn(x_{1_r}-x_1)$). This is because in the nozzle flow dynamics (actuation system), we are mainly concerned with robustness to disturbances (as discussed in Section 2.3). Hence, sliding mode becomes a suitable choice due to its robustness properties \cite{utkin2017sliding}. The mathematical structure of \textit{printed strand flow control law} \eqref{con-lw-2} on the other hand consists of a linear cancellation component ($-K_{21}x_1$), a linear optimal feedback component ($u_{\mathrm{OPT}}$), and a sliding mode component ($u_{\mathrm{SM}}$). This is due to the fact that we require both optimality and robustness properties for this printing system. Hence, inspired by \cite{young1997sliding} and \cite{5580607}, we choose such a combination law.
    \end{remm}

Now, we present the following mathematical formulation of the optimality and robustness objectives in Section 2.3.

\begin{itemize}
   \item \textit{Optimality in the Printing System:} This criterion is mathematically equivalent to minimizing the following cost functional: 
     \begin{align}
        \min_{u_2} \int_{0}^{t} \big[ Q (x_{2_r}(\tau)-x_2(\tau))^2 + R u_2^2(\tau) \big] d\tau. \label{opt-1}
   \end{align}
   where $Q$ and $R$ are user-defined weights.
    \item \textit{Robustness in the Printing System:} This criterion is mathematically equivalent to
     \begin{align}
        \left|x_2(t)-x_{2_r}(t)\right| \rightarrow 0, \ \eta_2 \neq 0. \label{rob-2}
   \end{align}
    \item \textit{Robustness in the Actuation System:} This criterion is mathematically equivalent to
     \begin{align}
        \left|x_1(t)-x_{1_r}(t)\right| \rightarrow 0, \ \eta_1 \neq 0. \label{rob-3}
   \end{align}
\end{itemize}

Next, we present our first proposition on the tracking and robustness performance of the \textit{nozzle flow controller}.

\vspace{1mm}

\begin{propp} [Robustness of nozzle flow dynamics]
    Consider the nozzle flow dynamics given by \eqref{ss-1}, the nozzle flow control law \eqref{con-lw-1}, the existence of a non-zero disturbance with the bound \eqref{bnd}, and the reference trajectory $x_{1_r}$ with the bounds \eqref{bnd-2}. Then, as desired by the robustness criterion \eqref{rob-3}, the reference nozzle flow velocity tracking error $\left|x_1-x_{1_r}\right| \rightarrow 0$ in finite time, if the controller gain satisfies the following condition:
    \begin{align}
        K_1 > \frac{1}{B_1 } (\bar{x}_{1_{rd}} + \left|A_1\right| + \bar{x}_{1_r} + \bar{\eta}_1). \label{cond-1}
    \end{align}
\end{propp}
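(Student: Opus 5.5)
The plan is the standard sliding-mode Lyapunov argument applied to the scalar tracking error $e_1 = x_{1_r} - x_1$; the sliding surface is simply $e_1 = 0$. Differentiating and using \eqref{ss-1} together with the nozzle flow control law \eqref{con-lw-1} gives
\begin{align*}
\dot e_1 = \dot x_{1_r} - A_1 x_1 - B_1 K_1 \sgn(e_1) - \eta_1 .
\end{align*}
I would take the Lyapunov candidate $V = \tfrac12 e_1^2$, with $B_1 > 0$, which is implicit in \eqref{cond-1}. Then
\begin{align*}
\dot V = e_1\bigl(\dot x_{1_r} - A_1 x_1 - \eta_1\bigr) - B_1 K_1 |e_1| \le |e_1|\bigl(\bar x_{1_{rd}} + |A_1|\,|x_1| + \bar\eta_1 - B_1K_1\bigr),
\end{align*}
where I use $e_1\sgn(e_1)=|e_1|$ and the bounds \eqref{bnd}--\eqref{bnd-2}.

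The key step is to bound the state-dependent drift term $|A_1 x_1|$ by a constant. I would write $x_1 = x_{1_r} - e_1$, which gives $|A_1x_1| \le |A_1|(\bar x_{1_r} + |e_1|)$. I would then close the argument with an invariance step. On the set $\{|e_1| \le |e_1(0)|\}$ the drift is bounded by a constant. As long as the bracket above is negative there, $|e_1|$ is nonincreasing, so the set is forward invariant.

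The main obstacle is that this naturally produces a condition of the form $B_1K_1 > \bar x_{1_{rd}} + |A_1|(\bar x_{1_r} + |e_1(0)|) + \bar\eta_1$. The stated condition \eqref{cond-1} instead has $|A_1| + \bar x_{1_r}$. To recover exactly that expression, I would try to use that the state lies in a normalized operating range ($|x_1|\le 1$ in the reduced-order model of \cite{looey2025physics}, or more generally $|A_1x_1|\le |A_1| + \bar x_{1_r}$ on the admissible region). If that fails, I would state the required bound on $|A_1 x_1|$ explicitly as the quantity that the gain must dominate. In either case the requirement is a constant $\gamma>0$ satisfying
\begin{align*}
B_1K_1 - \bigl(\bar x_{1_{rd}} + \sup|A_1x_1| + \bar\eta_1\bigr) \ge \gamma .
\end{align*}

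Given this, $\dot V \le -\gamma |e_1| = -\gamma\sqrt{2V}$, which is the $\eta$-reachability condition. Integrating $\tfrac{d}{dt}|e_1| \le -\gamma$ for $e_1\neq 0$ yields $|e_1(t)| \le |e_1(0)| - \gamma t$. Hence the error reaches zero at some $t_r \le |e_1(0)|/\gamma$. After that the same inequality keeps $e_1$ on the surface, interpreted in the Filippov sense for the discontinuous right-hand side. This gives $|x_1 - x_{1_r}| \to 0$ in finite time despite $\eta_1 \ne 0$, which is criterion \eqref{rob-3}.
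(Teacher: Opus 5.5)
The gap is exactly at the step you flag as the main obstacle. You bound the drift $-A_1x_1$ in absolute value, which throws away its sign. That leaves you with either a gain depending on $|e_1(0)|$, which is a semi-global statement rather than the proposition, or an a priori state bound such as $|x_1|\le 1$. That bound is not among the hypotheses and would itself have to come from the closed-loop behavior you are analyzing.

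The missing idea, which the paper invokes in its proof, is the open-loop stability $A_1<0$ of the nozzle dynamics. Split $-A_1x_1=-A_1x_{1_r}+A_1e_1$; then
\begin{align*}
\dot V = e_1\bigl(\dot x_{1_r}-A_1x_{1_r}-\eta_1\bigr)+A_1e_1^2-B_1K_1|e_1| \le |e_1|\bigl(\bar x_{1_{rd}}+|A_1|\,\bar x_{1_r}+\bar\eta_1-B_1K_1\bigr),
\end{align*}
because $A_1e_1^2\le 0$ can simply be dropped. The state-dependent part of the drift never has to be dominated by the gain; it helps convergence. What remains is a constant independent of $x_1$ and $e_1(0)$, so no invariance step is needed and the reaching condition is global.

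The sign is essential, not cosmetic. For $A_1>0$ no fixed $K_1$ works globally, since $A_1e_1^2$ eventually beats $B_1K_1|e_1|$. This is why an argument that ignores the sign can only produce an $|e_1(0)|$-dependent gain. Once $\dot V\le-\gamma|e_1|$ is in hand, your finite-time argument coincides with the paper's.

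On matching \eqref{cond-1} literally: the paper's chain produces the product $|A_1|\,\bar x_{1_r}$. The final equality in \eqref{lyap-3} then rewrites it as the sum $|A_1|+\bar x_{1_r}$, which is an algebraic slip; compare $|A_2|\,\bar x_{2_r}$ in \eqref{cond-2}. The sum is not sufficient in general. Take $B_1=1$, $A_1=-3$, $x_{1_r}\equiv 3$ and $\eta_1\equiv 0.1$, so that $\bar x_{1_{rd}}=0$ and $\bar\eta_1=0.1$. The gain $K_1=7$ satisfies \eqref{cond-1}. Yet the closed loop is $\dot x_1=-3x_1+7.1$ for $x_1<3$ and $\dot x_1<0$ for $x_1>3$, so $x_1\to 7.1/3$ and the error tends to $1.9/3\neq 0$.

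So you were right that the stated expression cannot be derived. The correct repair, however, is to prove $B_1K_1>\bar x_{1_{rd}}+|A_1|\,\bar x_{1_r}+\bar\eta_1$ via the sign argument above, not to import a normalization of the state.
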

\begin{proof}
We define the tracking error as $\tilde{x}_1 = x_{1_r} - x_1$, and compute the tracking error dynamics as:
\begin{align}
  \dot{\tilde{x}}_1 & = \dot{x}_{1_r} - \dot{x}_1 \nonumber\\
  & =  \dot{x}_{1_r} - A_1 x_{1_r} + A_1 \tilde{x}_1 - B_1 u_1 - \eta_1.  \label{er-1} 
\end{align}
Plugging in the control law \eqref{con-lw-1} in \eqref{er-1}, we get
\begin{align}
    & =  \dot{x}_{1_r} - A_1 x_{1_r} + A_1 \tilde{x}_1 - B_1 K_1 \sgn(\tilde{x}_1) - \eta_1. \label{er-2} 
\end{align}
Now, consider a candidate Lyapunov function $W_1 = \frac{1}{2} \tilde{x}^2_1$. Taking the derivative of $W_1$ along the tracking error trajectory dynamics, we get:
\begin{align}
   &  \dot{W}_1 = \tilde{x}_1\dot{\tilde{x}}_1 \nonumber \\
    & = \tilde{x}_1\dot{x}_{1_r} - A_1 \tilde{x}_1 x_{1_r} - \tilde{x}_1\eta_1 + A_1 \tilde{x}^2_1 - B_1 K_1 \tilde{x}_1\sgn(\tilde{x}_1). \label{lyap-1}
\end{align}
Now, applying Holder's inequality ($ab \leqslant \left|a b\right| \leqslant \left|a\right| \left|b\right|$) on the first three terms on the right hand side of \eqref{lyap-1}, considering the fact $\tilde{x}_1\sgn(\tilde{x}_1) = \left| \tilde{x}_1 \right|$, and $A_1 <0 $ due to the stable nature of the system, we can majorize the right hand side of \eqref{lyap-1} and get:
\begin{align}
   &  \dot{W}_1 \leqslant \left|\tilde{x}_1\right|\left|\dot{x}_{1_r}\right| + \left|A_1\right| \left| \tilde{x}_1\right| \left| x_{1_r}\right| + \left|\tilde{x}_1\right| \left|\eta_1\right| - B_1 K_1 \left|\tilde{x}_1\right| . \label{lyap-2}
\end{align}
Further applying the bounds \eqref{bnd}-\eqref{bnd-2}, we get
\begin{align}
     \dot{W}_1 & \leqslant \left|\tilde{x}_1\right| \bar{x}_{1_{rd}} + \left|A_1\right| \left| \tilde{x}_1\right| \bar{x}_{1_r} + \left|\tilde{x}_1\right| \bar{\eta}_1 - B_1 K_1 \left|\tilde{x}_1\right| \nonumber\\
   &  = \left|\tilde{x}_1\right| (\bar{x}_{1_{rd}} + \left|A_1\right| + \bar{x}_{1_r} + \bar{\eta}_1 - B_1 K_1). \label{lyap-3}
\end{align}
If the control gain $K_1$ is chosen to satisfy the condition \eqref{cond-1}, we have $\dot{W}_1 \leqslant - \alpha_1  \left|\tilde{x}_1\right|$ where $\alpha_1 >0$ is a positive representative lower bound of the quantity $B_1 K_1 - (\bar{x}_{1_{rd}} + \left|A_1\right| + \bar{x}_{1_r} + \bar{\eta}_1)$. This means that we have $\dot{W}_1 \leqslant - \sqrt{2}\alpha_1  \sqrt{W_1}$ which leads to the fact that $W_1 \rightarrow 0$ and hence $\left|\tilde{x}_1 \right| \rightarrow 0$ in finite time, even in the presence of disturbance $\eta_1 \neq 0$.
\end{proof}

Now, we present our proposition on the robust and optimal tracking performance of the \textit{printed strand flow controller}.

\vspace{1mm}

\begin{propp} [Optimality and robustness of strand flow dynamics]
    Consider the nozzle flow dynamics given by \eqref{ss-2}, the nozzle flow control law \eqref{con-lw-2}, the existence of a non-zero disturbance with the bound \eqref{bnd}, and the reference trajectory $x_{2_r}$ with the bounds \eqref{bnd-2}. Then, as desired by the robustness criterion \eqref{rob-2}, the reference printed strand flow velocity tracking error $\left|x_2-x_{2_r}\right| \rightarrow 0$ in finite time achieving the sliding motion defined by $s=0$ and $\dot{s}=0$, if the sliding mode controller gain $K_{22}$ satisfies the following condition:
    \begin{align}
        B_2K_{22} > \bar{x}_{2_{rd}} + |A_2|\,\bar{x}_{2_r} + \bar{n}_2. \label{cond-2}
    \end{align}
    Furthermore, under sliding motion, the control law $u_{\mathrm{OPT}} = {B_2P}{R}^{-1}\,\tilde{x}_2$ will minimize the objective function \eqref{opt-1}, where $P$ comes from the algebraic Riccati equation
\begin{equation}
2A_2P + Q = {P^{2}B_2^{2}}{R}^{-1}. \label{ric-1}
\end{equation}    
\end{propp}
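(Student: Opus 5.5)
We follow the integral sliding mode pattern of Proposition 1 and work with the tracking error $\tilde{x}_2 = x_{2_r} - x_2$. Differentiating and substituting \eqref{ss-2} together with the control law \eqref{con-lw-2}, the cancellation term $-K_{21}x_1$ is meant to remove the coupling $A_{21}x_1$; this needs $B_2K_{21} = A_{21}$, which we impose as the choice of $K_{21}$. Writing $x_2 = x_{2_r} - \tilde{x}_2$, we obtain
\begin{align*}
\dot{\tilde{x}}_2 = \dot{x}_{2_r} - A_2 x_{2_r} + \Big(A_2 - \tfrac{B_2^2P}{R}\Big)\tilde{x}_2 - B_2K_{22}\sgn(s) - \eta_2 .
\end{align*}
Differentiating the sliding variable then cancels the linear feedback part exactly:
\begin{align*}
\dot{s} = \dot{x}_{2_r} - A_2x_{2_r} - \eta_2 - B_2K_{22}\sgn(s).
\end{align*}
This is the key algebraic step. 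The integral term in $s$ was designed precisely so that $\dot s$ contains only the bounded "matched" terms and the switching term.

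Next we take $W_2 = \tfrac12 s^2$. We bound $\dot W_2 = s\dot s$ exactly as in \eqref{lyap-2}--\eqref{lyap-3}, using H\"older's inequality, $s\,\sgn(s) = |s|$, and the bounds \eqref{bnd}--\eqref{bnd-2}. This gives
\begin{align*}
\dot W_2 \leqslant |s|\big(\bar{x}_{2_{rd}} + |A_2|\bar{x}_{2_r} + \bar{\eta}_2 - B_2K_{22}\big) \leqslant -\alpha_2 |s| = -\sqrt{2}\,\alpha_2\sqrt{W_2}
\end{align*}
under \eqref{cond-2}, where we read $\bar n_2$ as $\bar\eta_2$. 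The comparison lemma then gives $s = 0$ in finite time, with $t_r \leqslant \sqrt{2W_2(0)}/\alpha_2$. Since $s(0) = \tilde{x}_2(0)$, we have $s(0)\neq 0$ in general, so this reaching phase is genuinely needed. Once $s \equiv 0$, also $\dot s = 0$, and the equivalent control yields the reduced dynamics $\dot{\tilde{x}}_2 = (A_2 - B_2^2P/R)\tilde{x}_2$, which is independent of $\eta_2$. This is the robustness (invariance) claim.

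For optimality on the sliding manifold, we view the nominal error dynamics as $\dot{\tilde{x}}_2 = A_2\tilde{x}_2 - B_2 v$, where $v$ is the feedback part of the input. The disturbance and reference terms are absorbed by the switching/equivalent control, and the cancellation term is absorbed by $K_{21}$. The problem \eqref{opt-1} is then a scalar infinite-horizon LQR in $\tilde x_2$ with weights $Q,R$. We verify optimality with the value function $V = P\tilde{x}_2^2$ and complete the square:
\begin{align*}
Q\tilde{x}_2^2 + Rv^2 + \dot V = R\Big(v - \tfrac{B_2P}{R}\tilde{x}_2\Big)^2 + \Big(2A_2P + Q - \tfrac{P^2B_2^2}{R}\Big)\tilde{x}_2^2 .
\end{align*}
When $P$ is the positive root of \eqref{ric-1}, the second bracket vanishes. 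Integrating then shows the cost is at least $P\tilde{x}_2(0)^2$, with equality iff $v = u_{\mathrm{OPT}}$. The same root gives $A_2 - B_2^2P/R = -\sqrt{A_2^2 + QB_2^2/R} < 0$, so $\tilde{x}_2 \to 0$ exponentially.

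\textbf{Main obstacle.} The two claims do not mesh cleanly. Sliding mode gives finite-time $s \to 0$, but $\tilde x_2$ itself then decays only exponentially on the manifold, so "finite time" in the statement should be read as finite-time reaching of $s=0$ followed by convergence of $\tilde x_2$. In addition, the cost \eqref{opt-1} penalizes the full $u_2$, including the $-K_{21}x_1$ and switching components. We would handle this by stating optimality for the feedback component $u_{\mathrm{OPT}}$ of the nominal sliding dynamics, in the spirit of \cite{young1997sliding}.
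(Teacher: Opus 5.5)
Your proposal is correct and follows the paper's proof essentially step for step: the choice $K_{21}=A_{21}/B_2$, the Lyapunov function $W_2=s^2/2$ with $\dot s$ freed of the linear feedback term, finite-time reaching of $s=0$ under the gain condition, the equivalent control reducing the error dynamics to $\dot{\tilde x}_2=A_2\tilde x_2-B_2u_{\mathrm{OPT}}$, and a scalar LQR on these dynamics with the cost penalizing only $u_{\mathrm{OPT}}$. Your completing-the-square check with $V=P\tilde x_2^2$ makes self-contained the standard LQR result the paper only cites, and your caveat is accurate, since the paper's argument also yields only finite-time reaching of $s=0$, after which $\tilde x_2$ decays exponentially at rate $\sqrt{A_2^2+QB_2^2/R}$ rather than vanishing in finite time.
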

\begin{proof}
We approach this proof following the technical treatment in \cite{young1997sliding} and \cite{5580607}. Considering \eqref{ss-2} and choosing $K_{21} = {A_{21}}/{B_2}$, the dynamics of the tracking error $\dot{x}_{2r} - \dot{x}_2$ can be written as:
\begin{align}
 \dot{\tilde{x}}_2 = \dot{x}_{2r} - \dot{x}_2  = & \ \dot{x}_{2r} - A_2 x_{2r} + A_2\tilde{x}_2  \nonumber\\
&  - B_2K_{22}\,\mathrm{sgn}(s) - B_2u_{\mathrm{OPT}} - n_2. \label{pr-2}
\end{align}
Consider the Lyapunov function candidate $W_2 = (1/2)s^2$ for the sliding surface $s$ as defined in \eqref{con-lw-2}. The derivative of this function can be written as:
\begin{align}
& \dot{W}_2 = s\dot{s} = s\left\{\dot{x}_{2r} - A_2x_{2r} - n_2 - B_2K_{22}\,\mathrm{sgn}(s)\right\}. \label{pr-22}
\end{align}
Applying Holder's inequality on some terms of the right hand side of \eqref{pr-22}, we get:
\begin{equation}
\dot{W}_2 \le -B_2K_{22}|s| + |s|\,\bar{x}_{2r,d} + |s|\,|A_2|\,\bar{x}_{2r} + \bar{n}_2|s|.
\end{equation}
Choosing $K_{22}$ such that \eqref{cond-2} is satisfied, gives us $\dot{W}_2 \leqslant - \lambda_1  \sqrt{W_2}$ (with some $\lambda_1>0$) which leads to the fact that $W_2 \rightarrow 0$ and hence $\left|{s} \right| \rightarrow 0$ in finite time, even in the presence of disturbance $\eta_2 \neq 0$ -- ultimately achieving the sliding motion \cite{utkin2017sliding}. In sliding motion, we have $s=0$ and $\dot{s}=0$ \cite{young1997sliding}. Now, considering $\dot{s}=0$, we can compute the equivalent control \cite{young1997sliding}:
\begin{equation}
\{K_{22}\,\mathrm{sgn}(s)\}_{\mathrm{eq}} \!=\! u_{\mathrm{eq}} \!=\! ({1}/{B_2})\left\{\dot{x}_{2r} - A_2x_{2r} - n_2\right\}. \label{eq-c}
\end{equation}
Plugging in \eqref{eq-c} in \eqref{pr-2}, we get:
\begin{align}
& \dot{\tilde{x}}_2
= \dot{x}_{2r} - A_2x_{2r} + A_2\tilde{x}_2 - B_2u_{\mathrm{OPT}} - n_2 - B_2u_{\mathrm{eq}} \nonumber\\
\Rightarrow & \dot{\tilde{x}}_2 = A_2\tilde{x}_2 - B_2u_{\mathrm{OPT}}. \label{eq-o}
\end{align}
The optimal control law for the system \eqref{eq-o} which minimizes the quadratic objective function
\begin{equation}
J = \int_{0}^{t}\left(Q\tilde{x}_2^{\,2} + Ru_{\mathrm{OPT}}^{\,2}\right)\,d\tau,
\end{equation}
is given by $u_{\mathrm{OPT}} = {R}^{-1}B_2P\,\tilde{x}_2$ where $P$ comes from the solution of algebraic Riccati equation \eqref{ric-1} \cite{lewis2012optimal}.
\end{proof}

\section{Results and Discussion}

This section evaluates the performance of the optimal and robust controllers. A comprehensive plant model that incorporates input-dependent parameters and considers realistic uncertainties is used in this study. The model uncertainties are derived from the probability distribution generated from the data collected in \cite{looey2025physics}. We evaluate the robust optimal controller performance through the following case studies.

 \begin{figure*}[h!]
    \centering
    \includegraphics[width=0.5\textwidth]{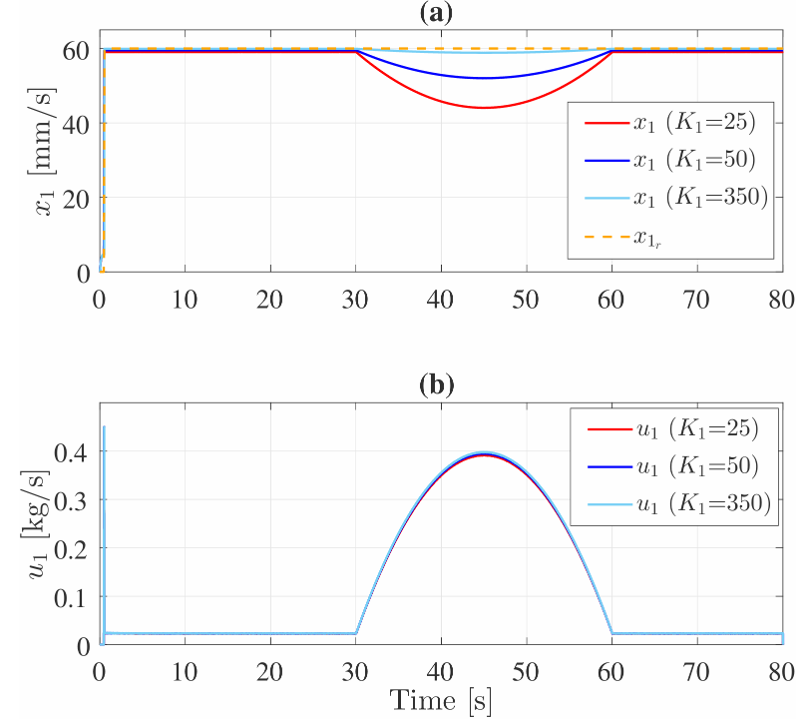}
    \caption{Reference velocity tracking for the actuation system under disturbance injection.}
    \label{fig:actuation system}
\end{figure*}

\textbf{Case Study 1:} The first case (shown in Fig. \ref{fig:actuation system}) analyzes the velocity tracking performance in the actuation system. A reference velocity is provided to the controller, which was attained in the first few seconds subsequently maintaining a smooth tracking. Next, a quadratic disturbance profile is introduced to the system between $t=30s$ and $t=60s$ to simulate temporary degradation of the actuation flow. As shown in Fig.3(a), the disturbance causes the nozzle flow velocity to deviate from the reference trajectory.
In response to this, the controller increases the inlet mass flow rate to compensate for the error (Fig.\ref{fig:actuation system}(b)). It is observed that the tracking recovery rate depends on the controller gain $K_1$, introduced in \ref{con-lw-1}. Increasing $K_1$ increases the corrective control action, which suppresses the disturbance effect and reduces tracking error. 
It is confirmed by Fig. \ref{fig:error_AS_PS}(a) which shows that with increasing control gain to $K_1 = 350$, tracking error is minimized to less than $2\%$. Furthermore, a higher peak of the control input in this plot indicates a more aggressive disturbance rejection mechanism of the controller.


\begin{figure*}[h!]
    \centering
    \includegraphics[width=0.5\textwidth]{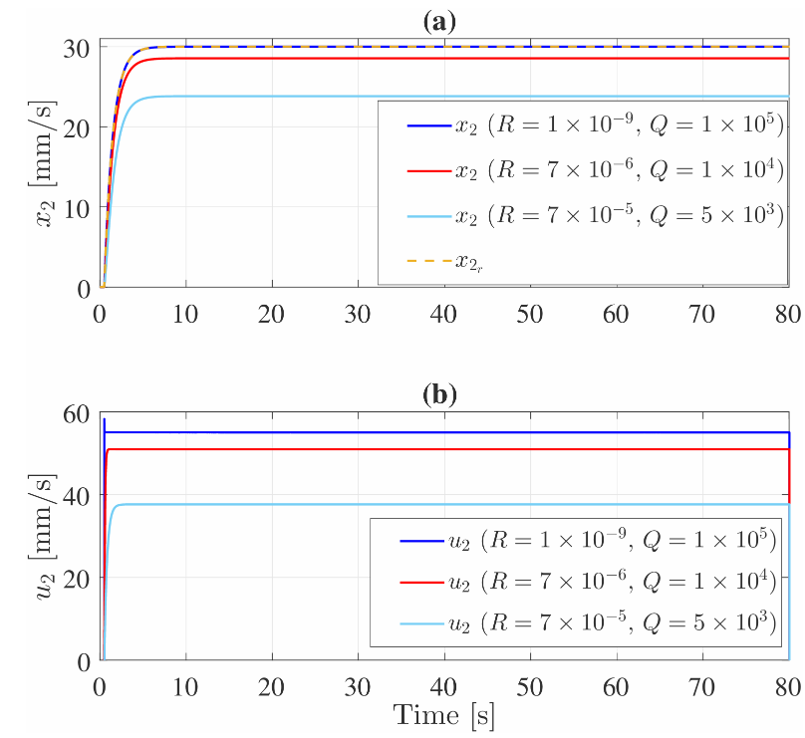}
    \caption{Reference velocity tracking for printing system with optimal only controller.}
    \label{fig:printing system}
\end{figure*}

\textbf{Case Study 2:} The second case (shown in Fig. \ref{fig:printing system}) evaluates the performance of the optimal only controller in the printing system (without the sliding mode component). 
We considered different combinations of control weights $R$ and $Q$ to find the optimal case for an acceptable tracking error. 
Fig. \ref{fig:printing system} shows how changing $Q$ and $R$ affects the tracking performance. In Fig. \ref{fig:printing system}(a), although all cases reach stable steady-state, only certain pairs of control weights with smaller $R$ and larger $Q$ exhibit faster convergence to the reference with reduced steady-state error. Furthermore, Fig.\ref{fig:printing system}(b) indicates a more aggressive controller input command with increasing $Q$ and decreasing $R$. 

\begin{figure}[h!]
    \centering
    \includegraphics[width=0.5\textwidth]{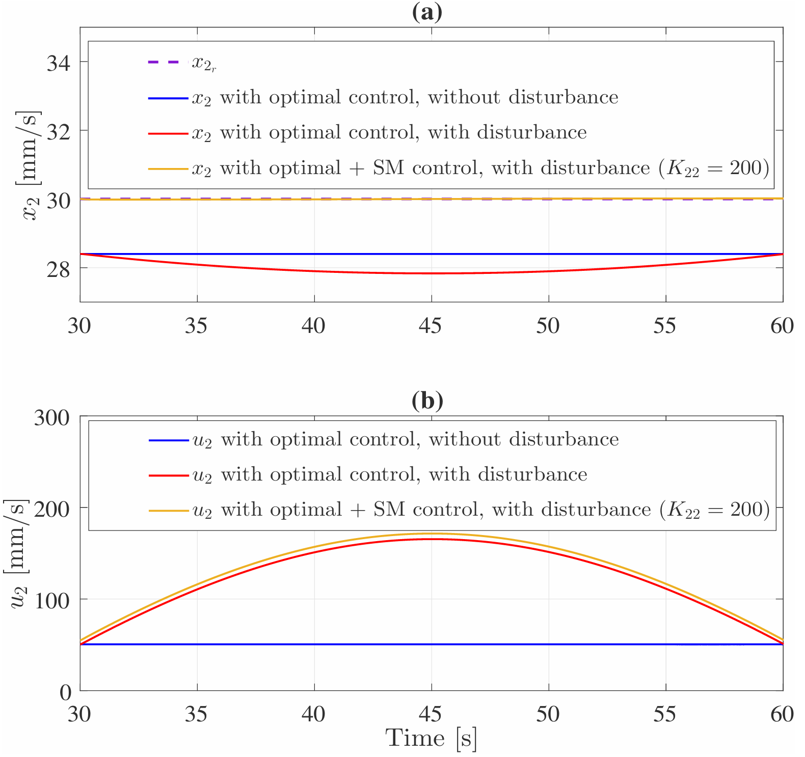}
    \caption{Reference velocity tracking for printing system with optimal and sliding mode (SM) controller.}
    \label{fig:printing system_sm_opt}
\end{figure}

\textbf{Case Study 3:}  The third case (shown in Fig. \ref{fig:printing system_sm_opt}) investigates the effect of incorporating optimal control and sliding mode control together into the algorithm for the printing system. Fig. \ref{fig:printing system_sm_opt}(a) shows that in the absence of disturbances, the optimal controller maintains steady-state tracking of the reference velocity with a constant steady-state error. Next, a quadratic disturbance profile is introduced between $t=30s$ and $t=60s$ to simulate the effect of build plate motion issues. When the disturbance is injected into the system, the state deviates from the reference, increasing the tracking error -- before gradually returning toward the steady-state value once the disturbance subsides. As shown in Fig. \ref{fig:printing system_sm_opt}(b), the control effort increases in response to the disturbance; however, the optimal controller alone does not significantly reduce the tracking error. Therefore, sliding mode control is incorporated to enhance disturbance rejection. 
Figs. \ref{fig:printing system_sm_opt}(a) and (b) demonstrate that the combined optimal–sliding mode controller significantly improves tracking performance while requiring only a modest increase in control effort 
(approximately $3\%$).


Figs. \ref{fig:error_AS_PS}(a) and (b) illustrate the influence of sliding mode controller gain ($K_1$) in actuation system and the optimal control weighting matrices $Q$ and $R$ in printing system. The effects on these two systems are evaluated in terms of the percentage maximum and percentage steady-state tracking error, respectively. Fig. \ref{fig:error_AS_PS}(a) shows that higher sliding mode controller gain leads to lower tracking error during disturbance injection. Fig. \ref{fig:error_AS_PS}(b) highlights the inherent trade-off in the optimal controller design: increasing $Q$ improves tracking accuracy by penalizing state error more heavily, while decreasing $R$ increases control effort to enforce faster correction. 
The results demonstrate that for sufficiently large values of $Q$, the steady-state error decreases below $5\%$, indicating high-fidelity tracking performance. 
The surface trend confirms the inherent trade-off between tracking accuracy and control energy. Therefore, the optimal selection of $Q$ and $R$ must balance steady-state accuracy with practical constraints on actuator capability and energy consumption.

\begin{figure}[h!]
    \centering
    \includegraphics[width=0.6\textwidth]{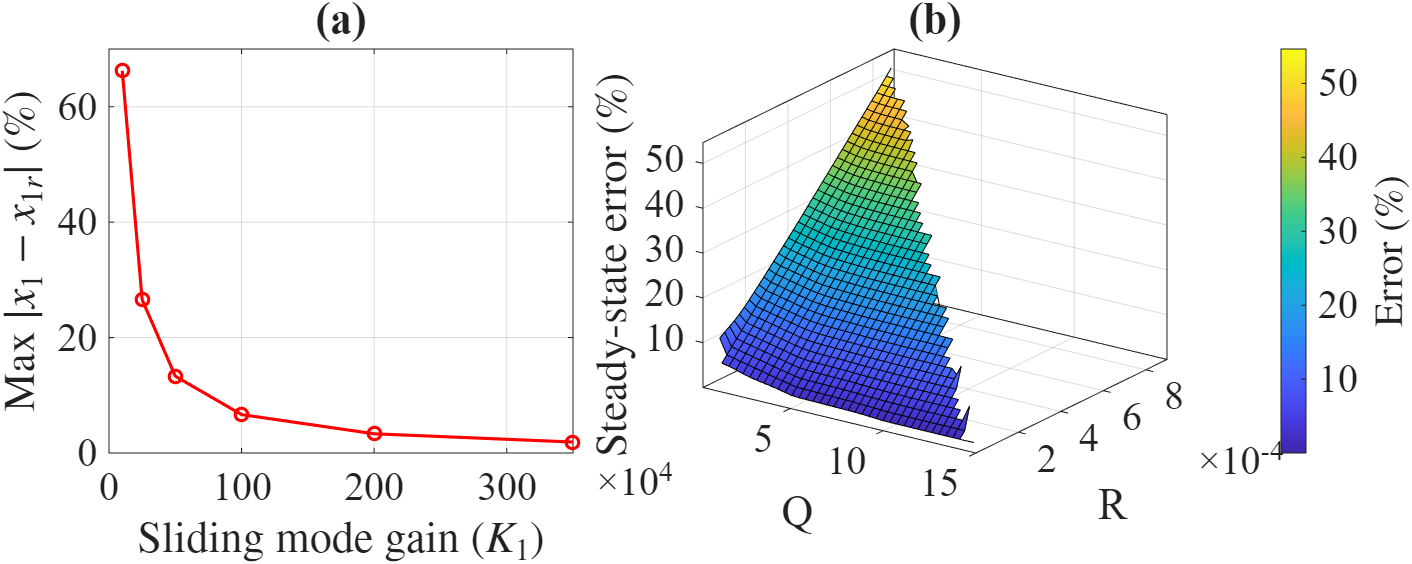}
    \caption{(a) Maximum tracking error after disturbance injection as a function of sliding mode gain in actuation system. (b) Percentage steady-state tracking error as a function of optimal control weights in printing system.}
    \label{fig:error_AS_PS}
\end{figure}


\section{Conclusions}

This study presented a robust linear quadratic optimal control framework to enhance the printability of cement-based materials in extrusion-based 3D printing. The proposed architecture integrates sliding mode control for disturbance rejection with linear quadratic optimal control for energy-efficient trajectory tracking. Robustness to disturbances in the actuation system is ensured by the sliding mode controller, while the printing subsystem achieves optimality and robustness in tracking performance through the linear quadratic control and sliding mode. Simulation results demonstrate improved disturbance rejection through increased sliding mode gain and highlight the trade-off between tracking accuracy and control effort when tuning optimal control weights. Overall, the framework improves flow regulation and extrusion fidelity under uncertainties. Future work will focus on experimental validation on a physical printer platform and extending the framework to coupled material processing and flow dynamics models.
\bibliographystyle{ieeetr}
\bibliography{ifacconf}             
                                                   







\end{document}